\documentclass[aps, prx, amsmath, amssymb, amsfonts, twocolumn,superscriptaddress, footinbib]{revtex4-1}
\usepackage[german,american,english]{babel}
\usepackage{graphicx}
\usepackage{graphics}
\usepackage{dcolumn}
\usepackage{bm}
\usepackage{amssymb}
\usepackage{amsmath}
\usepackage{amsfonts}
\usepackage{epsfig}
\usepackage{mathbbol}
\usepackage{latexsym}
\usepackage{dcolumn}
\usepackage{subfigure}
\usepackage{hyperref}
\usepackage{float}
\usepackage[normalem]{ulem}
\usepackage[usenames, dvipsnames]{color}
\usepackage{epstopdf}

\pdfstringdefDisableCommands{%
    %
}
\hypersetup{                    
    colorlinks,
    citecolor=blue,
    filecolor=blue,
    linkcolor=blue,
    urlcolor=blue
}
\usepackage[utf8]{inputenc}

\usepackage{algorithm}
\usepackage[noend]{algpseudocode}
\usepackage{caption}
\usepackage[paperwidth=199.8mm,paperheight=297mm,centering,hmargin=15mm,vmargin=2cm]{geometry}
\usepackage[dvipsnames]{xcolor}
\usepackage{framed}
\definecolor{shadecolor}{rgb}{0.9,0.9,0.9}
\usepackage{mathtools}
\usepackage{amsmath}
\usepackage[shortlabels]{enumitem}
\usepackage[titletoc,title]{appendix}
\usepackage{graphicx,epic,eepic,epsfig,amsmath,latexsym,amssymb,verbatim,color}
\usepackage{dsfont}

\usepackage{float}
\usepackage{tikz}
\usetikzlibrary{chains}
\usetikzlibrary{fit}
\usepackage{pgflibraryarrows}		
\usepackage{pgflibrarysnakes}		

\usepackage{epsfig}
\usetikzlibrary{shapes.symbols,patterns} 
\usepackage{pgfplots}

\usepackage[strict]{changepage}
\usepackage{hyperref}
\hypersetup{colorlinks=true,citecolor=blue,linkcolor=blue,filecolor=blue,urlcolor=blue,breaklinks=true}

\usepackage[marginal]{footmisc}
\usepackage{url}
\usepackage{theorem}
\newtheorem{definition}{Definition}

\newtheorem{theorem}[definition]{Theorem}

\usepackage{colortbl}
\usepackage{pifont}
\definecolor{Gray}{gray}{0.92}
\definecolor{Gray2}{gray}{0.75}
\definecolor{maroon}{cmyk}{0,0.87,0.68,0.32}
\usepackage{booktabs} 

\def\squareforqed{\hbox{\rlap{$\sqcap$}$\sqcup$}}
\def\qed{\ifmmode\squareforqed\else{\unskip\nobreak\hfil
\penalty50\hskip1em\null\nobreak\hfil\squareforqed
\parfillskip=0pt\finalhyphendemerits=0\endgraf}\fi}
\def\endenv{\ifmmode\;\else{\unskip\nobreak\hfil
\penalty50\hskip1em\null\nobreak\hfil\;
\parfillskip=0pt\finalhyphendemerits=0\endgraf}\fi}
\newenvironment{proof}{\noindent \textbf{{Proof~} }}{\hfill $\blacksquare$}

\newcounter{remark}

\newcounter{example}

\mathchardef\ordinarycolon\mathcode`\:
\mathcode`\:=\string"8000
\def\vcentcolon{\mathrel{\mathop\ordinarycolon}}
\begingroup \catcode`\:=\active
  \lowercase{\endgroup
  \let :\vcentcolon
  }

\usepackage{cleveref}
\usepackage{graphicx}
\usepackage{xcolor}

\RequirePackage[framemethod=default]{mdframed}
\newmdenv[skipabove=7pt,
skipbelow=7pt,
backgroundcolor=darkblue!15,
innerleftmargin=5pt,
innerrightmargin=5pt,
innertopmargin=5pt,
leftmargin=0cm,
rightmargin=0cm,
innerbottommargin=5pt,
linewidth=1pt]{tBox}

\newmdenv[skipabove=7pt,
skipbelow=7pt,
backgroundcolor=blue2!25,
innerleftmargin=5pt,
innerrightmargin=5pt,
innertopmargin=5pt,
leftmargin=0cm,
rightmargin=0cm,
innerbottommargin=5pt,
linewidth=1pt]{dBox}
\newmdenv[skipabove=7pt,
skipbelow=7pt,
backgroundcolor=darkkblue!15,
innerleftmargin=5pt,
innerrightmargin=5pt,
innertopmargin=5pt,
leftmargin=0cm,
rightmargin=0cm,
innerbottommargin=5pt,
linewidth=1pt]{sBox}
\definecolor{darkblue}{RGB}{0,76,156}
\definecolor{darkkblue}{RGB}{0,0,153}
\definecolor{blue2}{RGB}{102,178,255}
\definecolor{darkred}{RGB}{195,0,0}

\newcommand{\nc}{\newcommand}
\nc{\rnc}{\renewcommand}
\nc{\beg}{\begin{equation}}
\nc{\eeq}{{\end{equation}}}
\nc{\beqa}{\begin{eqnarray}}
\nc{\eeqa}{\end{eqnarray}}
\nc{\lbar}[1]{\overline{#1}}
\nc{\bra}[1]{\langle#1|}
\nc{\ket}[1]{|#1\rangle}
\nc{\ketbra}[2]{|#1\rangle\!\langle#2|}
\nc{\braket}[2]{\langle#1|#2\rangle}

\nc{\proj}[1]{| #1\rangle\!\langle #1 |}
\nc{\avg}[1]{\langle#1\rangle}
\nc{\rank}{\operatorname{Rank}}
\nc{\smfrac}[2]{\mbox{$\frac{#1}{#2}$}}
\nc{\tr}{\operatorname{Tr}}
\nc{\ox}{\otimes}
\nc{\dg}{\dagger}
\nc{\dn}{\downarrow}
\nc{\cA}{{\cal A}}
\nc{\cB}{{\cal B}}
\nc{\cC}{{\cal C}}
\nc{\cD}{{\cal D}}
\nc{\cE}{{\cal E}}
\nc{\cF}{{\cal F}}
\nc{\cG}{{\cal G}}
\nc{\cH}{{\cal H}}
\nc{\cI}{{\cal I}}
\nc{\cJ}{{\cal J}}
\nc{\cK}{{\cal K}}
\nc{\cL}{{\cal L}}
\nc{\cM}{{\cal M}}
\nc{\cN}{{\cal N}}
\nc{\cO}{{\cal O}}
\nc{\cP}{{\cal P}}
\nc{\cQ}{{\cal Q}}
\nc{\cR}{{\cal R}}
\nc{\cS}{{\cal S}}
\nc{\cT}{{\cal T}}
\nc{\cV}{{\cal V}}
\nc{\cX}{{\cal X}}
\nc{\cY}{{\cal Y}}
\nc{\cZ}{{\cal Z}}
\nc{\cW}{{\cal W}}
\nc{\csupp}{{\operatorname{csupp}}}
\nc{\qsupp}{{\operatorname{qsupp}}}
\nc{\var}{{\operatorname{var}}}
\nc{\rar}{\rightarrow}
\nc{\lrar}{\longrightarrow}
\nc{\polylog}{{\operatorname{polylog}}}
\nc{\wt}{{\operatorname{wt}}}
\nc{\av}[1]{{\left\langle {#1} \right\rangle}}
\nc{\supp}{{\operatorname{supp}}}

\def\x{\xi}

\nc{\RR}{{{\mathbb R}}}
\nc{\CC}{{{\mathbb C}}}
\nc{\FF}{{{\mathbb F}}}
\nc{\NN}{{{\mathbb N}}}
\nc{\ZZ}{{{\mathbb Z}}}
\nc{\PP}{{{\mathbb P}}}
\nc{\QQ}{{{\mathbb Q}}}
\nc{\UU}{{{\mathbb U}}}
\nc{\EE}{{{\mathbb E}}}
\nc{\id}{{\operatorname{id}}}

\nc{\CHSH}{{\operatorname{CHSH}}}

\nc{\be}{\begin{equation}}
\nc{\ee}{{\end{equation}}}
\nc{\bea}{\begin{eqnarray}}
\nc{\eea}{\end{eqnarray}}
\nc{\<}{\langle}
\rnc{\>}{\rangle}
\nc{\rU}{\mbox{U}}

\nc{\ob}[1]{#1}

\nc{\SEP}{{\text{\rm SEP}}}
\nc{\NS}{{\text{\rm NS}}}
\nc{\LOCC}{{\text{\rm LOCC}}}
\nc{\PPT}{{\text{\rm PPT}}}
\nc{\EXT}{{\text{\rm EXT}}}
\nc{\Sym}{{\operatorname{Sym}}}


\nc{\ERLO}{{E_{\text{r,LO}}}}
\nc{\ERLOCC}{{E_{\text{r,LOCC}}}}
\nc{\ERPPT}{{E_{\text{r,PPT}}}}
\nc{\ERLOCCinfty}{{E^{\infty}_{\text{r,LOCC}}}}
\nc{\Aram}{{\operatorname{\sf A}}}

\usepackage{tikz}
\usepackage{hyperref}
\hypersetup{colorlinks=true,citecolor=blue,linkcolor=blue,filecolor=blue,urlcolor=blue,breaklinks=true}

\makeatletter
\def\grd@save@target#1{%
  \def\grd@target{#1}}
\def\grd@save@start#1{%
  \def\grd@start{#1}}
\tikzset{
  grid with coordinates/.style={
    to path={%
      \pgfextra{%
        \edef\grd@@target{(\tikztotarget)}%
        \tikz@scan@one@point\grd@save@target\grd@@target\relax
        \edef\grd@@start{(\tikztostart)}%
        \tikz@scan@one@point\grd@save@start\grd@@start\relax
        \draw[minor help lines,magenta] (\tikztostart) grid (\tikztotarget);
        \draw[major help lines] (\tikztostart) grid (\tikztotarget);
        \grd@start
        \pgfmathsetmacro{\grd@xa}{\the\pgf@x/1cm}
        \pgfmathsetmacro{\grd@ya}{\the\pgf@y/1cm}
        \grd@target
        \pgfmathsetmacro{\grd@xb}{\the\pgf@x/1cm}
        \pgfmathsetmacro{\grd@yb}{\the\pgf@y/1cm}
        \pgfmathsetmacro{\grd@xc}{\grd@xa + \pgfkeysvalueof{/tikz/grid with coordinates/major step}}
        \pgfmathsetmacro{\grd@yc}{\grd@ya + \pgfkeysvalueof{/tikz/grid with coordinates/major step}}
        \foreach \x in {\grd@xa,\grd@xc,...,\grd@xb}
        \node[anchor=north] at (\x,\grd@ya) {\pgfmathprintnumber{\x}};
        \foreach \y in {\grd@ya,\grd@yc,...,\grd@yb}
        \node[anchor=east] at (\grd@xa,\y) {\pgfmathprintnumber{\y}};
      }
    }
  },
  minor help lines/.style={
    help lines,
    step=\pgfkeysvalueof{/tikz/grid with coordinates/minor step}
  },
  major help lines/.style={
    help lines,
    line width=\pgfkeysvalueof{/tikz/grid with coordinates/major line width},
    step=\pgfkeysvalueof{/tikz/grid with coordinates/major step}
  },
  grid with coordinates/.cd,
  minor step/.initial=.2,
  major step/.initial=1,
  major line width/.initial=2pt,
}
\makeatother

\usepackage{thmtools}
\usepackage{thm-restate}
\usepackage{etoolbox}
\makeatletter
\def\problem@s{}
\newcounter{problems@cnt}

\newcommand{\allproblems}{\problem@s}
\makeatother

\definecolor{colorone}{rgb}{1,0.36,0.03}
\definecolor{colortwo}{rgb}{0.54,0.71,0.03}
\definecolor{colorthree}{rgb}{0.01,0.51,0.93}
\definecolor{colorfour}{rgb}{0.47,0.26,0.58}
\usepackage{tcolorbox}
\usepackage{relsize}
\usepackage{graphicx}
\nc{\st}{\text{subject to} \ }
\nc{\supre}{\text{supremum} \ }
\nc{\sdp}{\text{sdp}}
\nc{\cU}{\mathcal U}
\usepackage[qm]{qcircuit}
\usepackage{array}


\begin{document}
\title{Noise-Assisted Quantum Autoencoder}
\author{Chenfeng Cao}
\thanks{chenfeng.cao@connect.ust.hk}
\affiliation{Institute for Quantum Computing, Baidu Research, Beijing 100193, China}
\affiliation{Department of Physics, The Hong Kong University of Science and Technology, Clear Water Bay, Kowloon, Hong Kong, China}

\author{Xin Wang}
\thanks{wangxin73@baidu.com}
\affiliation{Institute for Quantum Computing, Baidu Research, Beijing 100193, China}

\begin{abstract}
Quantum autoencoder is an efficient variational quantum algorithm for quantum data compression. However, previous quantum autoencoders fail to compress and recover high-rank mixed states. In this work, we discuss the fundamental properties and limitations of the standard quantum autoencoder model in more depth, and provide an information-theoretic solution to its recovering fidelity. Based on this understanding, we present a noise-assisted quantum autoencoder algorithm to go beyond the limitations, our model can achieve high recovering fidelity for general input states. Appropriate noise channels are used to make the input mixedness and output mixedness consistent, the noise setup is determined by measurement results of the trash system. Compared with the original quantum autoencoder model, the measurement information is fully used in our algorithm. In addition to the circuit model, we design a (noise-assisted) adiabatic model of quantum autoencoder that can be implemented on quantum annealers. We verified the validity of our methods through compressing the thermal states of transverse field Ising model and Werner states. For pure state ensemble compression, we also introduce a projected quantum autoencoder algorithm. 
\end{abstract}
\date{\today}
\maketitle

\section{Introduction}\label{Introduction}
In classical machine learning, autoencoder is a fundamental tool for learning generative models of data~\cite{Hinton1994,Kramer1991,Vincent2010,Kingma2013,Bengio2013}. Autoencoder learns an encoder map from input data to latent space and a decoder map from latent space to input space, such that the reproduction via the decoder is good for the training set. It has various practical applications, including dimensionality reduction, image denoising, anomaly detection, and information retrieval.

Quantum autoencoder (QAE)~\cite{Romero2017,Wan2017,Verdon2018,Lamata2018} is an approach for quantum data compression. The goal of a quantum autoencoder is to compress the quantum information of initial $(n_{A} + n_{B})$-qubit states onto $n_A$-qubit states via an encoder circuit and then reproduce the initial states approximately via a decoder circuit. Compressing quantum states from a given quantum source is also known as quantum-source coding in quantum Shannon theory. Such dimension reduction of quantum data allows us to perform quantum machine-learning tasks~\cite{Biamonte2017a,Schuld2015a,Arunachalam2017} with reduced quantum resources, and it has potential applications in other quantum technologies. The training procedure of QAE employs the hybrid quantum-classical computation framework and introduces a parameterized variational quantum circuit to learn the encoder. Hybrid quantum-classical algorithms are regarded as well suited for execution on near-term quantum computers~\cite{McClean2016,Endo2020} and they have been widely applied to many topics such as state preparation~\cite{peruzzo2014variational,Yuan2018a,Wang2020a,Chowdhury2020}, quantum linear algebra~\cite{Xu2019a,Huang2019b,bravo2019variational,Wang2020d}, entanglement quantification~\cite{Wang2020c}, quantum distance estimation~\cite{Cerezo2019,Chen2020a}, state and Hamiltonian diagonalization~\cite{Nakanishi2019,larose2019variational,Cerezo2020a,Zeng2020}.

QAE can be seen as an inverse model of quantum error correction \cite{knill1997theory}. In general quantum-error-correction schemes, we encode the information of a small quantum system, e.g., a qubit, to the subspace of a large system, which is resilient to noise. After the system goes through some noise channel, we can recover the state by a decoder. In contrast, QAE starts from a large-system state, compresses it to a small system using an encoder, and then reconstructs the initial state via a decoder. QAE can efficiently denoise certain quantum states \cite{Bondarenko2020,Zhang2020} and has been realized in experiments \cite{Pepper2019,ding2019experimental,huang2020realization}. The compression rate of QAE was also analyzed in~\cite{Ma2020}. More recently, QAE with an additional feature vector to characterize different inputs was proposed in Ref.~\cite{Bravo-Prieto2020}. However, the fundamental limit of the QAE model and its beyond are less understood.

In this paper, we derive theoretical analysis to better understand the limitations of QAE and introduce modified QAE algorithms to go beyond these limitations. First, we show a systematic study of the QAE model and answer what a QAE indeed learns, which helps better understand the fundamental limits of QAE. Second, we present a noise-assisted quantum-autoencoder model that can achieve much higher recovering fidelity than the fundamental limits of original QAE. Compared with the original QAE, we make better use of the measurement information since our noise setup parameters are also determined by the measurement results in the trash system. Noise is a daunting challenge for variational quantum algorithms on current noisy intermediate-scale quantum (NISQ) computers \cite{preskill2018quantum, wang2020noise, zeng2020simulating}, but it plays a positive role in our algorithm. The intuition of our method is that the input and output states of a QAE should have an approximately equal mixedness. When the input is a mixed state with a high rank, we could use adaptive noise channels to increase the mixedness of output. When the input is pure, we could use a projection subroutine to increase the output's purity. Mixedness consistency is an apparent requirement for quantum data compression and recovery. Third, we present a (noise-assisted) adiabatic quantum-autoencoder model for thermal-state compression, where no classical optimization process is needed. Our work may shed light on the application of quantum noise in quantum information processing.

The paper is organized as follows, in Sec. \ref{qae}, we introduce the standard QAE model and give a systematic theoretical analysis. In Sec. \ref{nqae}, we describe our noise-assisted quantum-autoencoder (N-QAE) algorithm. In Sec. \ref{aqae}, we introduce the adiabatic quantum autoencoder (A-QAE) and noise-assisted adiabatic quantum autoencoder (NA-QAE). In Sec. \ref{applications}, we apply our models to compress the thermal states of the transverse-field Ising model and Werner states. The conclusions and future directions are summarized and discussed in Sec. \ref{Conclusion}. Moreover, a projected quantum-autoencoder (P-QAE) algorithm for compressing an ensemble of pure states is discussed in Appendix \ref{pure}.

\section{Quantum Autoencoder}\label{qae}
One of the fundamental tasks in information theory is the compression of information. In general, the problem of data compression is to determine what are the minimal physical resources needed to store an information source. In the quantum world, the task is to compress a quantum source which is described by a Hilbert space $\mathcal{H}$, and a density matrix $\rho$ on that Hilbert space~\cite{Nielsen2010,Watrous2011b,	Wilde2017book,Hayashi2017b}. A compression scheme of rate $R$ for this compression task consists of two families of quantum operations $\cE$ and $\cD$, where $\cE$ is the encoding operation that takes $n$-qubit $\rho$ to a $2^{nR}$-dimensional state and $\cD$ is the decoding operation that maps the compressed states to quantum states in original input space. The behavior of this task is quantified by the fidelity between the original state and the reconstructed state, $i.e.$, $F(\rho,\cD\circ\cE(\rho))$. The scheme is considered reliable if such fidelity approaches one in the limit of $n$. Schumacher’s compression theorem establishes the von Neumann entropy as the fundamental limit on the asymptotic rate of quantum data compression.

Let us consider a bipartite quantum state $\rho_{AB} = \sum_{j}p_j\proj{\psi_j}$ with $n_A$-qubit $A$ subsystem and  $n_B$-qubit $B$ subsystem. The goal of a quantum autoencoder for quantum data compression is to train a parameterized quantum circuit $U$ to compress the input state $\rho_{AB}$ into the $A$ subsystem (latent space), such that one could reproduce $\rho_{AB}$ with high fidelity from the compressed state. The decoder is modeled by a parameterized quantum circuit $V$ acting on the compressed state in $A$ together with a referee state $\ket 0...\ket 0$ in $B'$. $B'$ and $B$ have the same dimension. Usually we choose $V = U^{\dagger}$. The general scheme of a quantum autoencoder is presented in Fig.~\ref{fig:autoencoder}.

The system $B$ is called the trash since it is discarded after the encoding procedure. The task of quantum data compression is pointed out to be related to decoupling in \cite{Romero2017}, in the sense that the trash system $B$ can be perfectly decoupled from the whole system, then the autoencoder can reach lossless compression. Based on this fact, Romero $\textit{et al.}$ \cite{Romero2017} proposed the trash-state cost function that quantifies the degree of decoupling,
\begin{align}
L_d(\bm\theta) :=&1 - F(\proj{0}_B,\rho_{B}^{out}) \label{eq:trash state local function}\\
= & 1 - \tr \proj{0}_B \rho_{B}^{out}\\
= & 1 - \tr (I_A\otimes\proj{0}_B)U(\bm\theta) \rho_{AB}U(\bm\theta)^\dagger,
\end{align}
where $\proj{0}_B$ is the reference state and $\rho_{B}^{out} = \tr_A U(\bm\theta) \rho_{AB}U(\bm\theta)^\dagger = \sum_{j}p_j\tr_A U(\bm\theta) \proj{\psi_j}U(\bm\theta)^\dagger$ is the trash state after applying the encoding circuit. Throughout the paper, we use the fidelity  $F(\rho,\sigma)=(\operatorname{Tr}\sqrt{\sigma^{1/2}\rho\sigma^{1/2}})^2$. When $\rho = |\psi \rangle \langle \psi|$ is a pure state, $F(\rho,\sigma)= \langle\psi| \sigma |\psi\rangle$. Note that the normal fidelity is defined as $\widehat F(\rho,\sigma)=\operatorname{Tr}\sqrt{\sigma^{1/2}\rho\sigma^{1/2}}$.

\begin{figure}[H]
	\centering
	\includegraphics[width=8.5cm]{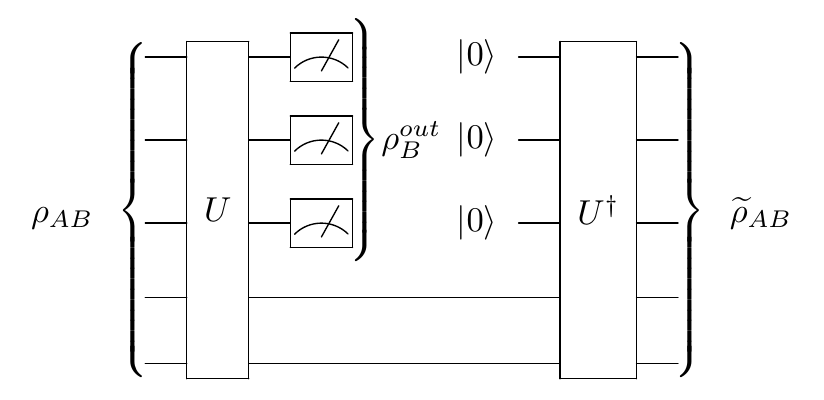}
	\caption{Structure of 5-2-5 QAE. $U$ is a parameterized quantum circuit with parameters $\bm\theta$. After applying $U$, we implement measurements on $\rho_{B}^{out}$ to update $\bm\theta$. $\rho_{A}^{out}=\operatorname{Tr}_B U\rho_{AB}U^\dagger$ is the compressed state. The initial state of $B'$ is $\ket 0...\ket 0$
	}
	\label{fig:autoencoder}
\end{figure}

In the task of quantum data compression, the ability of a quantum autoencoder is quantified by the fidelity of reconstruction:
\begin{align}
F(\rho_{AB},V(\rho_{A}^{out} \otimes \proj 0)V^\dagger),
\end{align}
where $\rho_{A}^{out} = \tr_B U(\bm\theta)\rho_{AB}U(\bm\theta)^\dagger$ is the compressed state and $V$ is the decoder. The corresponding infidelity cost function is then defined by
\begin{align}\label{eq:overall cost}
L_f(\bm\theta, V):= 1 - F(\rho_{AB},V(\rho_{A}^{out} \otimes \proj 0)V^\dagger).
\end{align}

To figure out the features that quantum autoencoders learn from the quantum data, we consider the ideal case where the encoder $U$ is the optimal solution that achieves the minimum of the decoupling cost function in Eq.~\eqref{eq:trash state local function}. That is, $U$ is the optimal solution to 
$\max_{U}\text{ }\langle0|_B(\operatorname{Tr}_A U\rho_{AB}U^{\dagger})|0\rangle_B 
= \max_{U} \operatorname{Tr} U\rho_{AB}U^{\dagger}(I_A\otimes |0\rangle \langle 0|_{B})$.

Previous quantum autoencoders~\cite{Romero2017,Verdon2018,Lamata2018} usually set the decoder as the inverse of the decoder, $i.e.$, $V(\bm\theta) = U(\bm\theta)^\dagger$. Via this step, training via the trash-state cost function~\eqref{eq:trash state local function} will become the minimization of a lower bound to the cost function corresponding to reconstruction fidelity since 
\begin{align*}
L_f(\bm\theta) &= 1 - F(\rho_{AB},V(\bm\theta)(\rho_{A}^{out} \otimes |0\rangle \langle 0|)V(\bm\theta)^\dagger) \\
&= 1- F(V(\bm\theta)^\dagger\rho_{AB}V(\bm\theta),\rho_{A}^{out} \otimes |0\rangle \langle 0|) \\
&= 1- F(U(\bm\theta)\rho_{AB}U(\bm\theta)^\dagger,\rho_{A}^{out} \otimes |0\rangle \langle 0|) \\
& \ge 1- F(\rho_{B}^{out},|0\rangle \langle 0|) = L_d(\bm\theta).
\end{align*}
Suppose the input state $\rho_{AB}$ has the spectral decomposition $\rho_{AB} = \sum_{j=1}^k p_j\proj{\psi_j}$ with decreasing spectrum $\{p_j\}_{j=1}^k$. After fully training, $L_d(\bm\theta) = 1-\sum_{j=1}^{d_A}p_j$~\cite{Ma2020}. We note that $L_d(\bm\theta)$ is only the infidelity between the final state of the trash system $B$ and the initial state of $B'$. However, this is different from the recovering infidelity on our focus. Specifically, we study the infidelity between the initial state and the recovered state, which is denoted by $L_f(\bm\theta)$ and characterizes the fundamental limits of QAE in compressing and recovering the quantum data. In Theorem~\ref{prop:lower bound QAE}, we show that there is always a nonzero gap between $L_d(\bm\theta)$ and $L_f(\bm\theta)$ when $k > d_{A}$.

Directly choosing the unitary $U(\bm\theta)^\dagger$ as the decoder also has drawbacks. Consider the rank of the recovered state $\widetilde\rho_{AB}=V(\bm\theta)(\rho_{A}^{out} \otimes |0\rangle \langle 0|)V(\bm\theta)^\dagger$, denote the dimensions of subsystems $A$ and $B$ as $d_{A}$ and $d_{B}$, we have 
\begin{align}\label{eq:rank constraint}
\rank (\widetilde \rho_{AB})  = \rank (\rho_{A}^{out}) \le d_{A}.
\end{align}
This fact significantly limits the ability of compressing high rank states. 

Before giving the theorem of fidelity bound, we introduce the quantum marginal problem, which is related to the achievability of the bound.
To be specific, given a set of local density matrices (quantum marginals) $\{\rho_{j}\}$, the Quantum Marginal Problem is to determine whether there exists an $d$-dimensional state $\rho$ with spectrum $\{\lambda\}$, such that $\{\rho_{j}\}$ are reduced density matrices of $\rho$.
Quantum marginal problem for general case is QMA-complete, which is unlikely to be efficiently solvable even with a quantum computer~\cite{liu2006consistency}. 

In the following, we give a theoretical bound of the QAE reconstruction fidelity and a information-theoretic proof. 

\begin{theorem}\label{prop:lower bound QAE}
	Suppose the input state $\rho_{AB}$ has the spectral decomposition $\rho_{AB} = \sum_{j=1}^k p_j\proj{\psi_j}$ with decreasing spectrum $\{p_j\}_{j=1}^k$. The reconstruction fidelity for a quantum autoencoder with fully trained encoder $U$ and decoder $U^\dagger$ has an upper bound $F \leq \sum_{j=1}^{d_A}p_j$. 
	\begin{enumerate}
	    \item When $k \leq d_{A}$, the bound can be achieved, $i.e.$, $F = \sum_{j=1}^k p_j = 1$. 
	    \item When $k > d_{A}$, the bound is achievable if and only if there exists a $d_{A}(d_{B}-1)$-dimensional quantum state with spectrum $\{p_i/ \sum_{j=d_{A}+1}^{k}p_j\}_{i=d_{A}+1,...,k}$ that has a $d_{A}$-dimensional quantum marginal with spectrum $\{p_i/ \sum_{j=1}^{d_{A}}p_j\}_{i=1,...,d_{A}}$. Even if the bound is achievable, the probability of achieving this fidelity bound for random input is 0.
	\end{enumerate}

\end{theorem}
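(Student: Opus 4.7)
The plan is to reduce the reconstruction fidelity to a single-system expression on $A$ and then apply a sharp trace-based inequality. First, by unitary invariance of fidelity,
\begin{align*}
F(\rho_{AB}, U^\dagger(\rho_{A}^{out} \otimes |0\rangle\langle 0|) U) = F(U\rho_{AB}U^\dagger, \rho_{A}^{out} \otimes |0\rangle\langle 0|_B).
\end{align*}
I would then characterize any $U$ that attains the minimum $L_d = 1 - \sum_{j=1}^{d_A} p_j$ already stated in the text: such $U$ must map the top $d_A$ eigenvectors of $\rho_{AB}$ into $\mathcal{H}_A \otimes |0\rangle_B$ as orthonormal states $|\phi_j\rangle_A \otimes |0\rangle_B$, and by orthogonality it must send the remaining $k - d_A$ eigenvectors to orthonormal $|\chi_j\rangle \in \mathcal{H}_A \otimes |0\rangle_B^\perp$. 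This lets me decompose $U\rho_{AB}U^\dagger = R_A \otimes |0\rangle\langle 0|_B + \tau$, where $R_A = \sum_{j=1}^{d_A} p_j|\phi_j\rangle\langle\phi_j|$ and $\tau = \sum_{j > d_A} p_j|\chi_j\rangle\langle\chi_j|$ is supported on $\mathcal{H}_A \otimes |0\rangle_B^\perp$.

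Next, I would compute the fidelity directly. Because $\tau$ has vanishing $\langle 0|_B(\cdot)|0\rangle_B$ block, setting $\sigma = \rho_{A}^{out} \otimes |0\rangle\langle 0|_B$ collapses $\sigma^{1/2}(U\rho_{AB}U^\dagger)\sigma^{1/2}$ to $\sqrt{\rho_{A}^{out}}\, R_A\, \sqrt{\rho_{A}^{out}} \otimes |0\rangle\langle 0|_B$, so the reconstruction fidelity equals $F(\rho_{A}^{out}, R_A)$ on $A$ alone. The Cauchy-Schwarz bound $\widehat F(X, Y) = \|\sqrt{X}\sqrt{Y}\|_1 \leq \sqrt{\tr(X)\tr(Y)}$ then yields $F(\rho_{A}^{out}, R_A) \leq \tr(R_A) = \sum_{j=1}^{d_A} p_j$, with equality iff $\rho_{A}^{out} = R_A/\tr(R_A)$.

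For part 1 ($k \leq d_A$), the image of $\rho_{AB}$ fits inside $\mathcal{H}_A \otimes |0\rangle_B$, so $\tau = 0$, $\rho_{A}^{out} = R_A$ has trace one, and the bound is met with $F = 1$. For part 2 ($k > d_A$), using $\rho_{A}^{out} = R_A + \tr_B \tau$ the equality condition rearranges to $\tr_B(\tau/\tr\tau) = R_A/\tr(R_A)$. Since $\tau/\tr\tau$ is a density matrix on the $d_A(d_B - 1)$-dimensional space $\mathcal{H}_A \otimes |0\rangle_B^\perp$ with spectrum $\{p_i/\sum_{j>d_A} p_j\}_{i>d_A}$, whose $A$-marginal must have spectrum $\{p_i/\sum_{j \leq d_A} p_j\}_{i \leq d_A}$, and since the residual freedom in choosing $\{|\phi_j\rangle\}$ and $\{|\chi_j\rangle\}$ among optimal decoupling encoders is precisely a unitary freedom preserving these spectra, solvability reduces to the quantum marginal problem as stated.

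For the probability zero claim, I would argue that even when the spectral condition is feasible, the additional equality $\tr_B(\tau/\tr\tau) = R_A/\tr(R_A)$ goes beyond mere spectrum matching and imposes nontrivial algebraic constraints, carving out a proper lower-dimensional submanifold of the relevant parameter space; hence for any absolutely continuous distribution on $\rho_{AB}$ the equality holds with probability zero. The hard part will be making this measure-theoretic claim precise: I need to specify whether the randomness is over $\rho_{AB}$, over its eigenbasis, or over the manifold of optimal encoders, and then verify in each case that the equality condition defines a proper algebraic subset of positive codimension.
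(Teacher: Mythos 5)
Your proof is correct, but it reaches the bound by a genuinely different route than the paper. The paper first proves $F \le \sum_{j=1}^{d_A} p_j$ for \emph{any} encoder--decoder pair $(U,U^\dagger)$ using only the rank constraint $\rank(\widetilde\rho_{AB})\le d_A$: it bounds the fidelity by the measurement-induced fidelity for the POVM $\{\proj{\psi_j}\}$, then applies the rearrangement inequality and Cauchy--Schwarz; achievability is then handled by an explicit construction. You instead exploit the structure of a fully trained encoder from the outset, write $U\rho_{AB}U^\dagger = R_A\otimes\proj{0}_B + \tau$ with $\tau$ supported on $\cH_A\otimes\ket{0}_B^{\perp}$, observe that the block structure collapses the reconstruction fidelity exactly to $F(\rho_A^{out},R_A)$, and finish with a single Hölder step $\|\sqrt{X}\sqrt{Y}\|_1\le\sqrt{\tr X\,\tr Y}$. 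What the paper's route buys is generality (the bound holds even for non-optimal encoders, though the theorem does not need this); what your route buys is an exact equality condition, $\rho_A^{out}=R_A/\tr R_A$, equivalently $\tr_{B}(\tau/\tr\tau)=R_A/\tr R_A$, which makes the ``if and only if'' of part 2 --- including the only-if direction, which the paper treats somewhat informally --- follow cleanly from one statement, and reduces to the same spectral quantum marginal problem once you invoke the residual $W_A\otimes I$ freedom to align eigenbases. On the measure-zero claim, your treatment is no less rigorous than the paper's (which simply asserts that $\{\rho_M\}$ has Lebesgue measure zero in $\cH_A\otimes\cH_{\bar B}$); your explicit acknowledgment that one must fix the probability space (random input versus random optimal encoder) and verify positive codimension is, if anything, the more honest account of what remains to be made precise.
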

\begin{proof}
	First, we will show a fundamental limit on the reconstruction fidelity.
	As pointed out in Eq.~\eqref{eq:rank constraint}, the decoder $U^\dagger$ can only output quantum states with maximum rank $d_A$. Let us assume that the reconstructed state $\widetilde \rho_{AB}$ has the spectral decomposition $\widetilde \rho_{AB} = \sum_{j=1}^{d_A} q_j\proj{\phi_j}$ with decreasing spectrum $\{q_j\}_{j=1}^{d_A}$. Consider the positive operator-valued measure $\{E_j = \proj{\psi_j}\}_{j=1}^{d_Ad_B}$ with orthonormal vectors $\ket{\psi_j}$, we have 
	\begin{align}
	F(\rho_{AB},\widetilde\rho_{AB}) \le & \left(\sum_{j=1}^{d_Ad_B} \sqrt{\tr(\rho_{AB}E_j)\tr(\widetilde\rho_{AB}E_j)} \right)^2\\
	&= \left(\sum_{j=1}^k  \sqrt{p_j \bra{\psi_j}\widetilde\rho_{AB}\ket{\psi_j}}\right)^2 \\
	&\le  \left(\sum_{j=1}^{d_A} \sqrt{p_jq_j} \right)^2\\
	&\le  \left(\sum_{j=1}^{d_A} p_j \right) \left(\sum_{j=1}^{d_A} q_j \right)= \sum_{j=1}^{d_A}p_j.
	\end{align}
	The first inequality follows due to fact that the fidelity is no larger than the fidelity induced by a measurement~\cite{Nielsen2010}.
	The second inequality follows due to the rearrangement inequality and the decreasing spectrum of $\widetilde \rho_{AB}$. The last inequality follows due to the Cauchy Schwarz inequality.
	
	Second, we show the achievability by introducing a feasible protocol. To achieve the maximum overlap with $\proj0_B$, there exists an encoder unitary $U$ transforms $\ket{\psi_j}$ to $\ket{v_j}_A\otimes\ket 0_B$ for each $1\le j\le d_A$ with orthonormal vectors $\{\ket{v_j}_A\}_{j=1}^{d_A}$. Simultaneously, the encoder $U$ transforms $\ket{\psi_j}$ for $d_A<j\le k$ to the vectors in the orthogonal complement of $\rm{span}\{\ket{v_j}_A\otimes\ket 0_B\}_{j=1}^{d_A}$.
	Without loss of generality, we assume that $U$ transforms $\ket{\psi_j}$ for $d_A<j\le k$ to $\ket{w_j}$ such that $\bra{0}_{B}\tr_{A}\proj {w_j}0 \rangle_{B}=0$. The orthogonal complement of
	$\rm{span}\{\ket{v_j}_A\otimes\ket 0_B\}_{j=1}^{d_A}$ 
	is also the space spanned by all possible $\ket{w_j}$, denoted by $\mathcal{H}_A \otimes \mathcal{H}_{\bar{B}}$, where $\mathcal{H}_{\bar{B}}$ is the 
    orthogonal complement of span$\{\ket 0_B\}$ in $\mathcal{H}_B$. 
    $\operatorname{dim} \mathcal{H}_A \otimes \mathcal{H}_{\bar{B}} = d_{A}(d_{B} - 1)$.

	For the behavior of the decoding step, our analysis considers two cases. 
		When $k\le d_A$, the compressed state is given by
		\begin{align}
		\rho_A^{out} 
		&= \sum_{j=1}^{k}p_j\tr_B\proj{v_j}\otimes\proj{0}_B\\
		&= \sum_{j=1}^{k}p_j\proj{v_j}
		\end{align}
		Then the decoder unitary $U^\dagger$ can perfectly recover the compressed state, since it can transform the eigenstates of $\rho_A^{out}\otimes \proj{0}$ back to the eigenstates of $\rho_{AB}$ whose corresponding eigenvalues are encoded in $\rho_A^{out}$, $i.e.$,
		\begin{align}
		U^\dagger(\ket{v_j}_A\otimes\ket 0_B) = \ket {\psi_j}, \quad\forall j.
		\end{align}
		Therefore, we have
		$U^\dagger(\rho_{A}^{out}\otimes \proj0_{B'})U = \sum_{j=1}^{k}p_j\proj{\psi_j} = \rho_{AB}$.
		
		When $k > d_A$, the compressed state is
		\begin{align}
		\rho_A^{out} &= \sum_{j=1}^{d_A}p_j\proj{v_j}+ \sum_{j=d_A+1}^{k}p_j\tr_{\bar{B}}\proj{w_j}.
		\end{align}
		
		Given the quantum marginal in $\mathcal{H}_A$ 
		\begin{equation}
		    \rho_{m}=\sum_{j=1}^{d_A}p_j\proj{v_j}  /(\sum_{j=1}^{d_A}p_j),
		\end{equation}
		if the condition in Theorem \ref{prop:lower bound QAE}(2) is satisfied, there exist $\ket{w_j}$s such that $\rho_{m}$ is the reduced density matrix of a quantum state $\rho_{M} \in \mathcal{H}_A \otimes \mathcal{H}_{\bar{B}}$, i.e.,
		\begin{equation}
		    \rho_{m} = \tr_{\bar{B}} \rho_{M}.
		\end{equation}
		with
		\begin{equation}
		    \rho_{M} = \sum_{j=d_{A}+1}^{k} p_{j} \proj{w_j}/(\sum_{j=d_{A}+1}^{k}p_j).
		\end{equation}
		Note that constraints on the spectrum of quantum marginals are discussed in Refs.~\cite{han2005compatibility, klyachko2004quantum, klyachko2006quantum}. 
		
		In this case, 
		\begin{align}
		\rho_A^{out} & = \sum_{j=1}^{d_A}p_j\proj{v_j}+\frac{\sum_{j=d_{A}+1}^{k}p_j}  {\sum_{j=1}^{d_A}p_j}\sum_{j=1}^{d_A}p_j\proj{v_j}\\
		& = 
		\frac{\sum_{j=1}^{d_A}p_j\proj{v_j}}  {\sum_{j=1}^{d_A}p_j}.
		\end{align}

		Then the reconstructed state is given by
		\begin{align*}
		\widetilde \rho_{AB} = U^{\dagger}(\rho_A^{out}\otimes\proj{0}_{B'})U
        = \frac{\sum_{j=1}^{d_A}p_j\proj{\psi_j}}  {\sum_{j=1}^{d_A}p_j}.
		\end{align*}
        Hence the fidelity of reconstruction is given by
		$F(\widetilde\rho_{AB},\rho_{AB})= \sum_{j=1}^{d_A}p_j$. However, since we cannot choose $\ket{w_j}$ manually in the QAE scheme, the Lebesgue measure of states $\{\rho_M\}$ is 0 in $\mathcal{H}_A \otimes \mathcal{H}_{\bar{B}}$, therefore for random input $\rho_{AB}$, the probability of achieving this fidelity is 0.
		
		If the condition in Theorem \ref{prop:lower bound QAE}(2) is not satisfied, one could only obtain
		\begin{align*}
		\widetilde \rho_{AB} &= \sum_{j=1}^{d_A}p'_j\proj{\psi_j},
		\end{align*}
		where $\{p'_j\} \neq \{p_j/(\sum_{j=1}^{d_A}p_j)\}$, which indicates that the fidelity of reconstruction
		$F(\widetilde\rho_{AB},\rho_{AB}) < \sum_{j=1}^{d_A}p_j$.

\end{proof}

In brief, when $k \leq d_A$, standard QAE can perfectly recover the input state after training; when $k > d_A$, the recovering fidelity is bounded by $\sum_{j=1}^{d_A}p_j$, this bound is not always achievable, its achievability depends on the specific input.

QAE learns to decouple the correlation between subsystems $A$ and $B$ through increasing the fidelity between $\rho_{B}^{out}$ and a target pure state ($i.e.$ $\proj 0$). If $A$ and $B$ are fully decoupled, replacing $\rho_{B}^{out}$ by $\proj 0$ will not destroy the internal entanglement, the decoder can recover the input precisely. However, in this model, the rank of the output state is limited by the dimension of $A$, $d_{A}$. If the input state rank is greater than $d_{A}$, even the optimal encoder can only decouple the eigenstates for the $d_A$ largest eigenvalues. This limitation is due to the underutilization of the information (measure results) of the trash system after training.

\section{Noise-assisted Quantum Autoencoder}\label{nqae}
Inspired by the above fundamental limit on quantum autoencoders, we know that the fidelity of reconstruction via the above quantum autoencoder could be bad for quantum states with rank larger than $d_A$ (dimension of the compressed system), in particular for the states with flat spectrums. For example, let us consider a rank-four three-qubit state $\rho_{AB}$ with spectrum $\{p_j=1/4\}_{j=1}^4$, where $d_A=2$ and $d_B=4$. Based on the above proposition, even for the best case of training, the fidelity of reconstruction is always no larger than ${1}/{2}$.

To overcome the above weakness of only reconstructing a low-rank state with low fidelity, we propose the noise-assisted quantum autoencoder for quantum data compression, which uses quantum noise to assist the decoder in enhancing the fidelity of recovery.  

For the trash system $B$, we implement variational quantum diagonalization \cite{cerezo2020variational} to extract the spectra of $\rho_{B}^{out}$. For a loss Hamiltonian summed by local $Z$ Pauli operators,
\begin{equation}
H_{cost}=\mathbb{1}-\sum_{j=1}^{n_{B}} r_{j} Z_{j}.
\end{equation}
With appropriate choice of $\{r_j\}$ (e.g. $r_{j} = 1/2^{j-1}$), we can ensure that all diagonal terms (eigenvalues) of $H_{cost}$ are nondegenerate and arranged in the ascending order. We optimize the parameters in the encoder $U$ to minimize the cost function 
\begin{equation}
L = \tr[\rho_{B}^{out} H_{cost}],
\end{equation}
only Pauli-$z$ local measurements are required, therefore for shallow quantum circuit, we can avoid the trainability problem~\cite{cerezo2021cost}.  

Instead of inputting the pure state $\ket 0...\ket 0$, we input a mixed state $\rho_{B'}^{in}$ for $B'$ to the decoder to improve the rank of the final-state density matrix. More specifically, we input the mixed state
\begin{equation}
    \sigma_{j} = \begin{pmatrix}
 1-\epsilon_{j}& 0\\ 
 0& \epsilon_{j}
\end{pmatrix}
\end{equation}
to the $j$th qubit in $B'$. $\rho_{B'}^{in}=\otimes_{j=1}^{n_{B}} \sigma_{j}$. Note that the noise rates $\{\epsilon_{j}\}_{j=1}^{n_{B}}$ are determined by measurements on trash system $B$ as follows
\begin{equation}
\epsilon_{j} = \frac{1}{2}-\frac{1}{2}\tr[\rho_{B}^{out} Z_{j}],
\end{equation}
which utilizes the information in the trash system. The diagonalization of trash system $B$ can provide us partial spectral information of $\rho_{AB}$, therefore we can choose appropriate $\{\epsilon_{j}\}$ to increase the mixedness of the output state $\widetilde \rho_{AB}$. From the information aspect, the noise-assisted quantum autoencoder extracts part of the information of $\widetilde \rho_{AB}$ through measurements, and then transfers the extracted classical information to quantum information via adding appropriate noise channels.

The fundamental reason that QAE can successfully compress a low-rank mixed state is that the encoder can decouple the correlation between subsystem $A$ and subsystem $B$. Our  method is a direct extension of this philosophy. Diagonalization of trash system $B$ sequentially in the computational basis can decouple the correlation between $A$, $B$ and the correlation within B, $i.e.$, only a minimum number of qubits in $B$ are still coupled with subsystem $A$ after encoding. Therefore, only a minimum number of qubits in $B$ will be operated in the noise channel. If the rank of input state $\rho_{AB}$, $k$, is no greater than $d_A$, the input state can be perfectly decoupled and compressed, each $\epsilon_{j}$ equals 0. If $k$ is greater than $d_A$, $\lceil \log_{2} \frac{k}{d_{A}} \rceil$ qubits in $B'$ will be added quantum noise, improving the rank of $\widetilde \rho_{AB}$ to $2^{\lceil \log_{2} \frac{k}{d_{A}} \rceil}d_{A}$. We note that the number of qubits are rounded up, therefore there exists information redundancy if $\log_{2} \frac{k}{d_{A}}$ is not an integer. This redundancy is inevitable since we only do local measurements and implement local noise channels on $B'$.

The fidelity of N-QAE is fully characterized by \begin{align}\label{eq:fid output QAE}
F(\rho^{out},\rho_A^{out}\otimes\rho_{B'}^{in}),
\end{align}
which is a mean-field-like approximation. Since the initial state is efficiently decoupled after encoding and the approximation is very precise, N-QAE can achieve high recovering fidelity. 

In experiments, we can prepare each $\sigma_{j}$ through amplitude-damping noise. Denote the relaxation time of the system as $T_1$, we initialize the system to $\proj 1$, after time $t_{j} = -T_{1}\ln (1-\epsilon_{j})$, we apply $X$ gate to the $j$th qubit, then we have the required $\rho_{j}$ since
\begin{align}
    \mathcal{N}_{j}(\proj 1) = \begin{pmatrix}
 1-e^{-t_{j}/T_{1}}& 0\\
 0& e^{-t_{j}/T_{1}} \end{pmatrix} 
 = X\sigma_{j}X.
\end{align}
Some other approaches, e.g. twirling operation, can also be used to prepare $\sigma_{j}$.

The workflow of N-QAE is as follows.
And we present the structure of 5-2-5 N-QAE as an example in Fig.~\ref{fig:NQAE}. 
	\begin{enumerate}
		\item Choose the ansatz of encoder unitary $U(\bm\theta)$ and initial parameters of $\bm\theta$.
		\item Apply the encoder $U(\bm\theta)$ to the initial state $\rho_{AB}$.
		
		\item Measure $\rho_{B}^{out}$ and update the cost function $L(\bm\theta)=  \tr[\rho_{B}^{out} H_{cost}]$.
		\item Perform optimization (e.g., gradient descent) of $L(\bm\theta)$ and obtain a new parameter $\bm\theta$.
		\item Repeat steps 2--4 until convergence with certain tolerance.
		\item Report the classical information $\bm\theta$, $f$,  and store the quantum state $\rho_A^{out}$.
		
		\item Implement the local amplitude-damping noise channel $\mathcal{N}$ to prepare $\rho_{B'}^{in}$.
		\item Apply $U(\bm\theta)^{\dagger}$ to $\rho_A^{out}\otimes\rho_{B'}^{in}$ and denote the output as $\widetilde\rho_{AB}$.
		\item Output the reconstructed state $\widetilde\rho_{AB}$.
	\end{enumerate}

\begin{figure}[H]
	\centering
	\includegraphics[width=8.5cm]{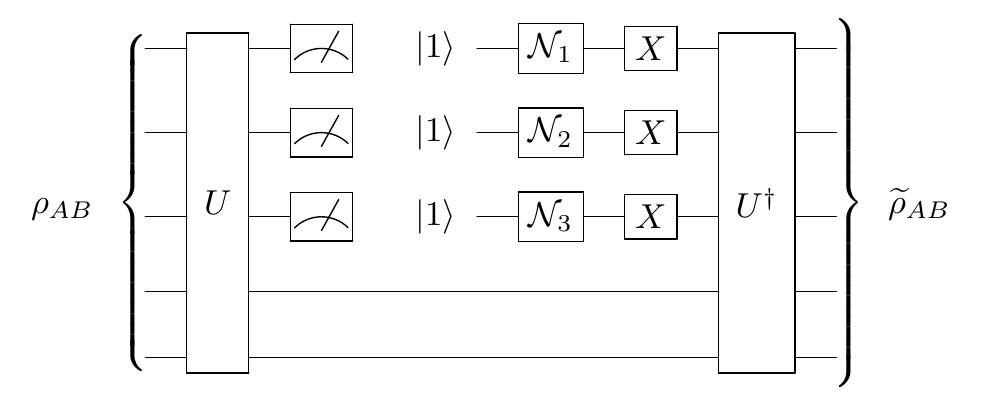}
	\caption{Structure of 5-2-5 N-QAE. $\rho_{B'}^{in}=\otimes_{j=1}^{n_{B}} \sigma_{j}$ is prepared by amplitude-damping noise channel $\mathcal{N} = \otimes_{j=1}^{n_{B}}\mathcal{N}_{j}$ and X gates.
	}
	\label{fig:NQAE} 
\end{figure}
Compared with the original QAE, N-QAE makes full use of the measured information of trash system $B$. Through diagonalization of $\rho_{B}^{out}$, N-QAE decouples the correlation between $A$, $B$ and the correlation within $B$. Replacing $\rho_{B}^{out}$ by a mixed product state $\rho_{B'}^{in}$, which is prepared by local noise channels, can efficiently keep the mixedness, therefore break through the rank limitation and improve the recovering fidelity.

\section{Adiabatic Quantum Autoencoder}\label{aqae}
Given Hamiltonian $H$, the thermal state of $H$ has the form
\begin{equation}
\rho_{\beta}=\frac{e^{-\beta H}}{\tr\left(e^{-\beta H}\right)},
\end{equation}
where $\beta = 1/k_{B}T$, $T$ is the temperature of the system.

Suppose we want to compress the thermal state $\rho_{\beta}$ of a given Hamiltonian $H$, QAE and NQAE can do this efficiently for small systems. However, with the increase of system size, the gradient vanishing problem, and local minimums in the optimization may lead to wrong results.  Here, we introduce an adiabatic model of quantum autoencoder (A-QAE) that can be implemented on quantum annealers to compress and recover $\rho_{\beta}$. Our model extends the applicable domain of the quantum adiabatic algorithm \cite{albash2018adiabatic} to mixed states.

We start the system in the thermal state $\rho_{\beta}$ and evolve according to the time-dependent Hamiltonian
\begin{equation}
H_{encoder}(t) = (1-\frac{t}{t_{a}}) H +\frac{t}{t_{a}}H_{l},
\end{equation}
$t_{a}$ is the annealing time, $H_l$ is the latent Hamiltonian. According to adiabatic theorem \cite{born1928beweis}, if the minimum gaps between the $j$th energy level and neighboring energy levels are strictly greater than 0 and the evolution is slow enough, the $j$th eigenstate of $H$, $\ket \psi _{j}$, will evolve to the $j$th eigenstate of $H_l$, $\ket \psi _{j}^l$ with high fidelity. 

In adiabatic quantum computing, we usually start from the ground state of the initial Hamiltonian, evolve the system Hamiltonian to the target Hamiltonian slowly, then we have the ground state of the target Hamiltonian for a gapped evolution path. In the A-QAE scheme, however, we start from a thermal state of the initial Hamiltonian, evolve the system to the latent Hamiltonian. If all neighboring gaps during the evolution are strictly greater than 0, the spectrum of the initial and the final state are the same, $i.e.$,

\begin{equation}\label{eq:aqae_spectrum}
    \sum_{j}p_j\proj{\psi_j}\overset{encoder}{\longrightarrow} \sum_{j}p_j\proj{\psi_j^l}
\end{equation}

In our scheme, we choose the latent Hamiltonian as
\begin{equation}
H_{l}=I_{A}\otimes(\mathbb{1}-\sum_{j=1}^{n_{B}} r_{j} Z_{j}),
\end{equation}
with $r_{j} = 1/2^{j-1}$. Each eigenstate of $H_l$ is a $d_{A}$-fold degenerate product state. The ideal adiabatic evolution in Eq.~\eqref{eq:aqae_spectrum} can fully decouple the correlation between subsystem $A$ and subsystem $B$.

The decoding evolution is the inverse of the encoding one. We set the initial state of the decoder to be $\rho_{A}^{out}\otimes \proj 0 _{B'}$, evolve the system according to the time-dependent Hamiltonian
\begin{equation}
H_{decoder}(t) = (1-\frac{t}{t_{a}}) H_{l} +\frac{t}{t_{a}}H.
\end{equation}
If the gap between the $d_{A}$th energy level and the $(d_{A}+1)$th energy level during the encoding and decoding evolutions is strictly greater than 0, A-QAE preserves the lowest $d_{A}$ eigenstates of $H$.

The same subroutine in noise-assisted quantum autoencoder (cf.~step 7) can be applied to improve the recovering fidelity. We extract the spectral information of $\rho_{B}^{out}$ via measurements on the trash $B$ system, then set the noise rates by
\begin{equation}
\epsilon_{j} = \frac{1}{2}-\frac{1}{2}\tr[\rho_{B}^{out} Z_{j}]
\end{equation}
and input
\begin{equation}
    \sigma_{j} = \begin{pmatrix}
 1-\epsilon_{j}& 0\\ 
 0& \epsilon_{j}
\end{pmatrix}
\end{equation}
to the $j$th qubit in system $B'$.
We call this modified model noise-assisted adiabatic quantum autoencoder. 

In summary, A-QAE and NA-QAE can decouple the correlation between $A$ and $B$ through an adiabatic evolution to a product latent Hamiltonian. Although unwanted level crossings may slightly lower the recovering fidelity, we do not need to train the encoder via measurements and optimization.

\section{Applications and experiments}\label{applications}
In this section, we numerically show some applications of our models. Each mixed state can be represented as the thermal state of a Hamiltonian, without loss of generality, we apply our noise-assisted algorithms to the thermal states of one-dimensional transverse-field Ising Model (TFIM) and Werner states.
We use parameterized quantum circuits to form the encoder and decoder. The structure is shown in Fig. \ref{fig:pqc}. Each block is repeated for $p$ times. We adaptively increase $p$ to find the appropriate depth. The first $R_Y$ and $R_Z$ rotations can prepare any product state;  CNOT entangling layers can introduce entanglement; $R_Z$-$R_Y$-$R_Z$ rotations between neighboring entangling layers can form any product unitary operator $\otimes_{j = 1}^{n_A + n_B}U_{j}$ up to a global phase. Our simulations and optimization loop are implemented via Paddle Quantum~\cite{paddlequantum} on the PaddlePaddle deep learning platform~\cite{Ma2019,paddlepaddle}
\begin{figure}[H]
	\centering
	\includegraphics[width=6cm]{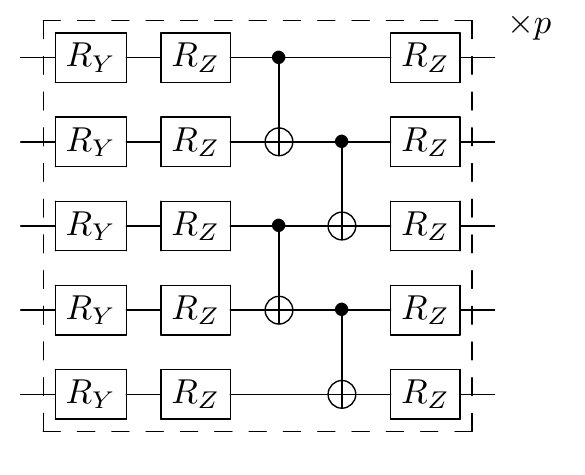}
	\caption{Example of the parameterized quantum circuit used in our experiments}
	\label{fig:pqc}
\end{figure}

\subsection{Transverse-field Ising model}
The Hamiltonian of one-dimensional transverse-field Ising model is
\begin{equation}
H^{tIsing} = -J(\sum_{j} Z_{j}Z_{j+1}  + g\sum_{j} X_{j}),
\end{equation}
in the following, we set $J = g = 1$, $n_{A} = 2$, $n_{B} = 3$. The system is in a gapless phase.

For $\rho_{\beta}$ with different $\beta$ values, we compress the thermal states 
\begin{equation}
    \rho_{\beta}=\frac{e^{-\beta H^{tIsing}}}{\tr\left(e^{-\beta H^{tIsing}}\right)} 
\end{equation}
with different models. In QAE and N-QAE, the optimizer is Adam, the learning rate is $0.05$. In A-QAE and NA-QAE, the latent Hamiltonian is $H_{l}=I_{A}\otimes(\mathbb{1}-\sum_{j=1}^{n_{B}} 1/2^{j-1} Z_{j})$.

The von Neumann entropy $\mathcal{S}(\rho)=-\tr(\rho\log\rho)$ is a measure of the mixedness of a given quantum state $\rho$. Here, the entropies of the input states and output states of different models are shown in Fig. \ref{fig:tfim4}. Noise can improve output mixedness to that of input. 

\begin{figure}[H]
	\centering
	\includegraphics[width=7.5cm]{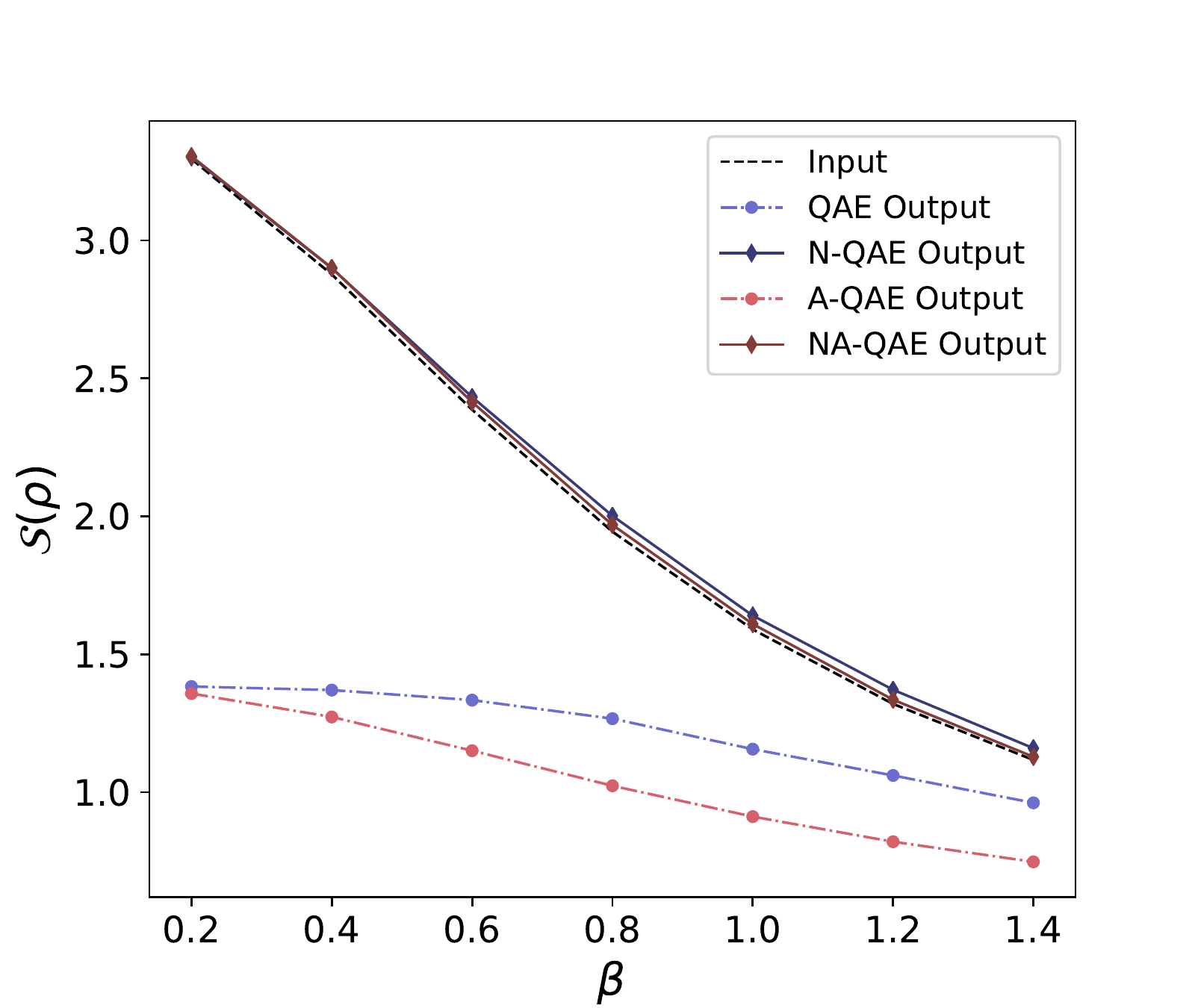}
	\caption{Input and output von Neumann entropy for compressing thermal states of TFIM with different models. The annealing time of A-QAE and NA-QAE is $t_{a} = 1 \times10^{3}$.}
	\label{fig:tfim4}
\end{figure}

The recovering fidelities are shown in Fig. \ref{fig:tfim}. For all $\beta$ values, noise-assisted quantum autoencoders can compress and recover the thermal state precisely. With the increase of $\beta$, the proportion of high-lying excited states in $\rho_{\beta}$ decreases exponentially, the required noise rates $\{\epsilon_{j}\}$ in our algorithms decreases accordingly. A-QAE performs worse than QAE; nevertheless, NA-QAE outperforms the gate-based N-QAE. 
In N-AQE, a deep parameterized encoder circuit will be required for large systems, and thus the optimization may be trapped to a local minimum. NA-QAE can efficiently diagonalize high-lying eigenstates. 

For the case when $\beta = 1$, the comparison between QAE and N-QAE for different iterations is shown in Fig. \ref{fig:tfim2}. N-QAE outperforms QAE dramatically after a critical point.

The comparison between A-QAE and NA-QAE for different annealing times is shown in Fig. \ref{fig:tfim3}, still, $\beta = 1$. We note that A-QAE recovering fidelity is far from the fidelity bound since our choice of $H_l$ and the evolution path is not optimal, some eigenstates may not be adiabatically connected to a product state. There are unwanted level crossings, Eq. (\ref{eq:aqae_spectrum}) is only approximately satisfied. For the A-QAE model, a better latent Hamiltonian is $H_{l} = I_{A}\otimes (-\sum_{j}Z_{j} -  \sum_{j}Z_{j}Z_{j+1})_{B}$, where the gap between the $d_A$th and the $(d_A + 1)$th energy level is much larger.

\begin{figure}[H]
\centering
\subfigure[]{
\label{fig:tfim}
\includegraphics[width=7.5cm]{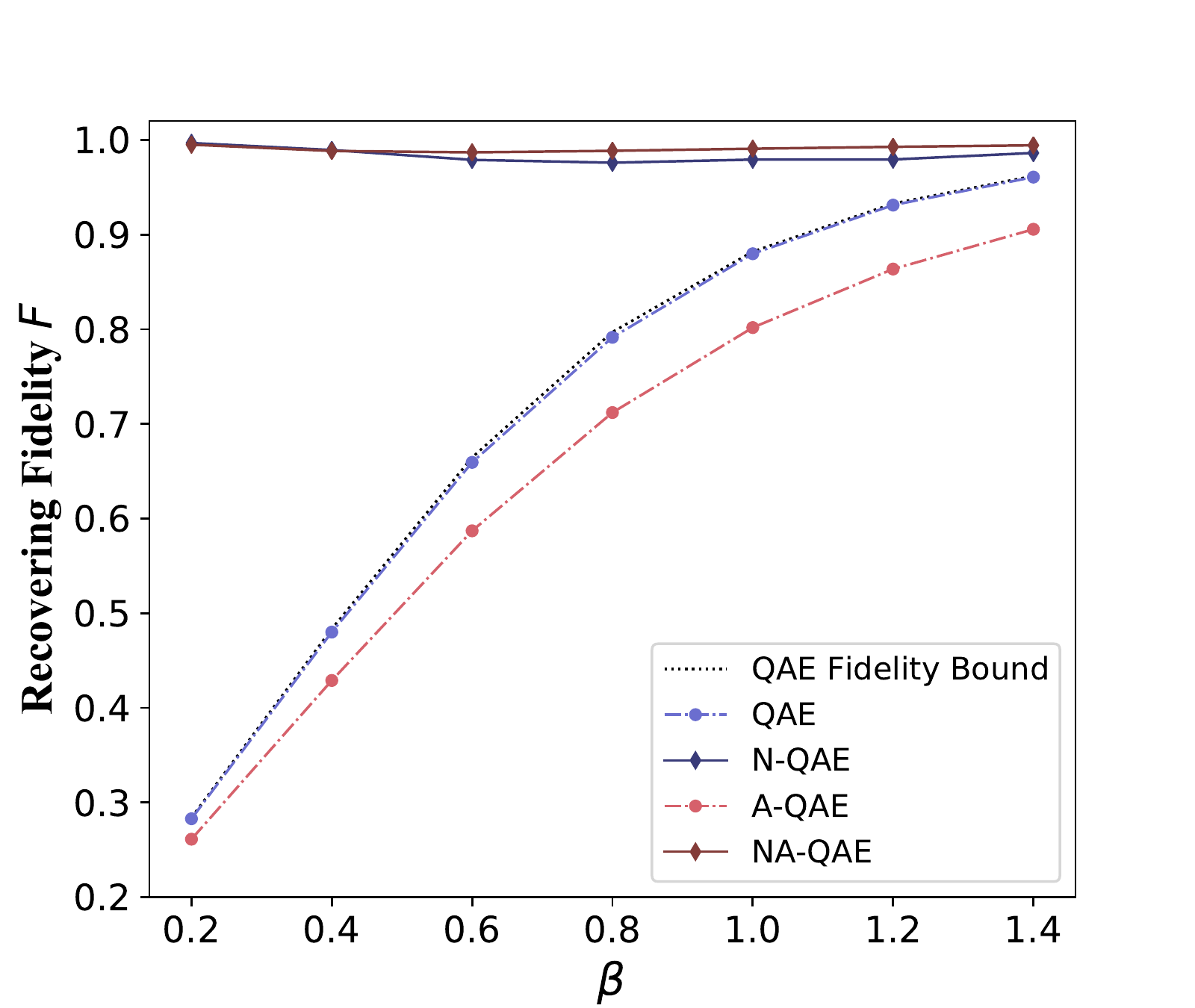}
}
\subfigure[]{
\label{fig:tfim2}
\includegraphics[width=7.5cm]{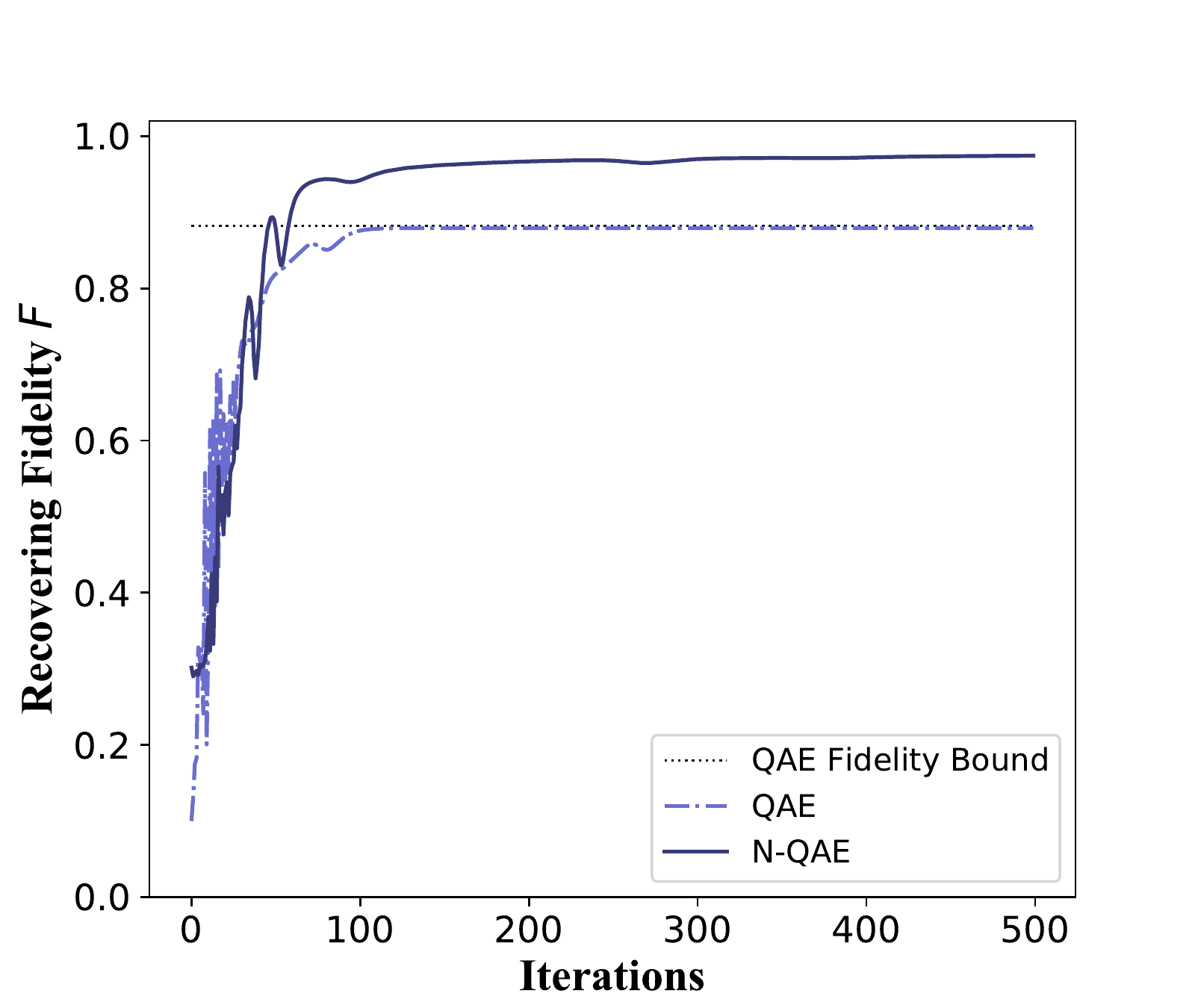}
}
\subfigure[]{
\label{fig:tfim3}
\includegraphics[width=7.5cm]{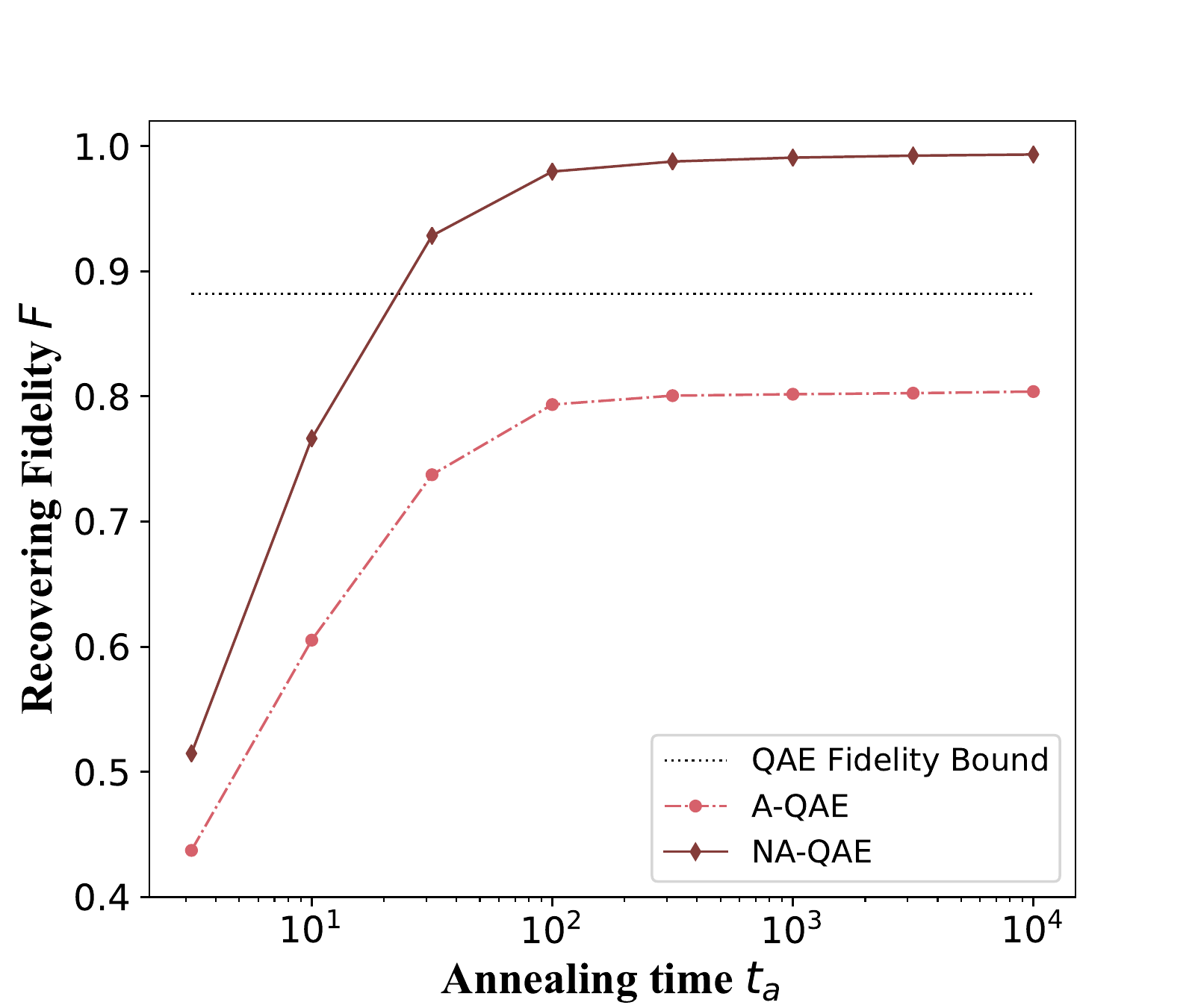}
}
\caption{Reconstruction fidelity for different models. Results for noiseless models are represented by dashed lines, results for noise-assisted models are represented by solid lines. (a) Recovering fidelity versus $\beta$. The annealing time of A-QAE and NA-QAE is $t_{a} = 1 \times10^{3}$. (b) Recovering fidelity versus the number of iterations for QAE and N-QAE. (c) Recovering fidelity versus annealing time $t_a$ for A-QAE and NA-QAE.  }
\end{figure}

\subsection{Werner states}
Werner state is a bipartite quantum state that is invariant under any unitary operator of the form $U \otimes U$, it can be parameterized by
\begin{equation}
    \rho_{W}(\alpha)=\frac{1}{d^{2}-d \alpha}\left(I_{d^{2}}-\alpha F\right),
\end{equation}
where $F=\sum_{kj}\ketbra{kj}{jk}$, $d$ is the dimension of each party of the state,  and $\alpha$ varies between $-1$ and $1$. In the following, we compress six-qubit Werner states with 6-5-6 QAE models, i.e. $d = 8$, $n_A = 5$, $n_B = 1$. Note that Werner state with six qubits is separable when $\alpha \leq 1/8$ and entangled when $\alpha > 1/8$.

In A-QAE and NA-QAE, we construct the Hamiltonian $H_{\alpha}$ such that $\rho_{W}(\alpha) = e^{-H_{\alpha}}$ for $-1 < \alpha < 1$, and set $H_l = H_{\alpha}$. The input and output von Neumann entropy of different models are shown in Fig. \ref{fig:werner_vn}, the recovering fidelities are shown in Fig. \ref{fig:werner_f}. Noise-assisted models can still enhance the mixedness and recover the initial state efficiently, nevertheless, they fail to increase the fidelity to 0.95 when $\alpha$ is close to -1. This indicates the limitation of N-QAE and NA-QAE for certain energy level-spacing distributions.

\begin{figure}[H]
	\centering
	\includegraphics[width=7.5cm]{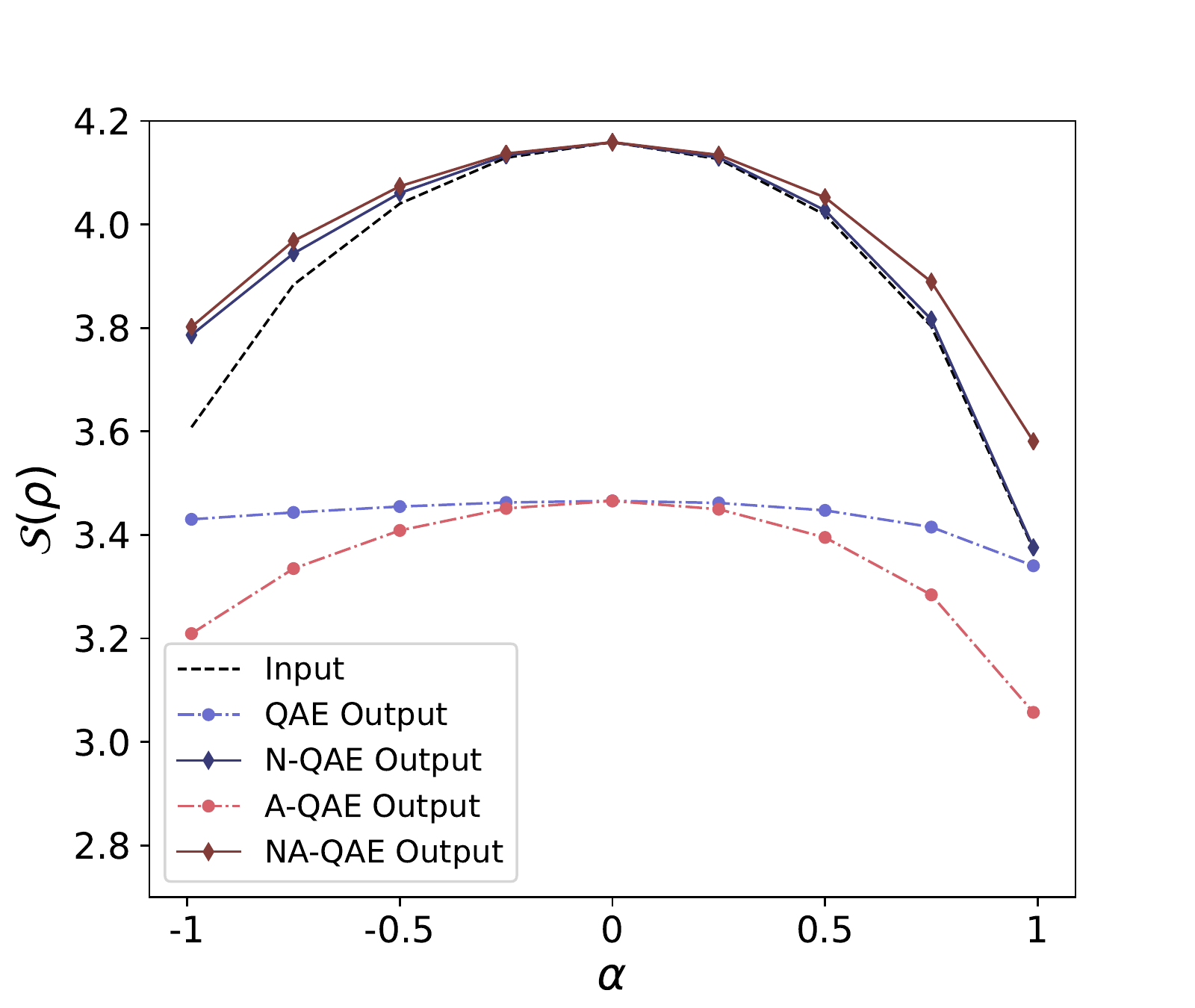}
	\caption{Input and output von Neumann entropy for compressing Werner states with different models. The annealing time of A-QAE and NA-QAE is $t_{a} = 1 \times10^{3}$.}
	\label{fig:werner_vn}
\end{figure}

Our models may also be applied to classical data compression. Take image compression as an example, given the matrix representation of an image, $G$. We can easily construct the corresponding Hamiltonian through
\begin{equation}
    H^{G} = (G + G^{T}) + i(G - G^{T}),
\end{equation}
prepare two thermal states $\rho^{G}_{\beta} = e^{-\beta H^{G}}/\tr(e^{-\beta H^{G}})$ and $\rho'^{G}_{\beta} = e^{-\beta (-H^{G})}/\tr(e^{-\beta (-H^{G})})$, which contain all the information of the image $G$. One could compress $\rho^{G}_{\beta}$ and $\rho'^{G}_{\beta}$ with quantum-autoencoder models, and then implement quantum tomography and extract the compressed classical information.
\begin{figure}[H]
	\centering
	\includegraphics[width=7.5cm]{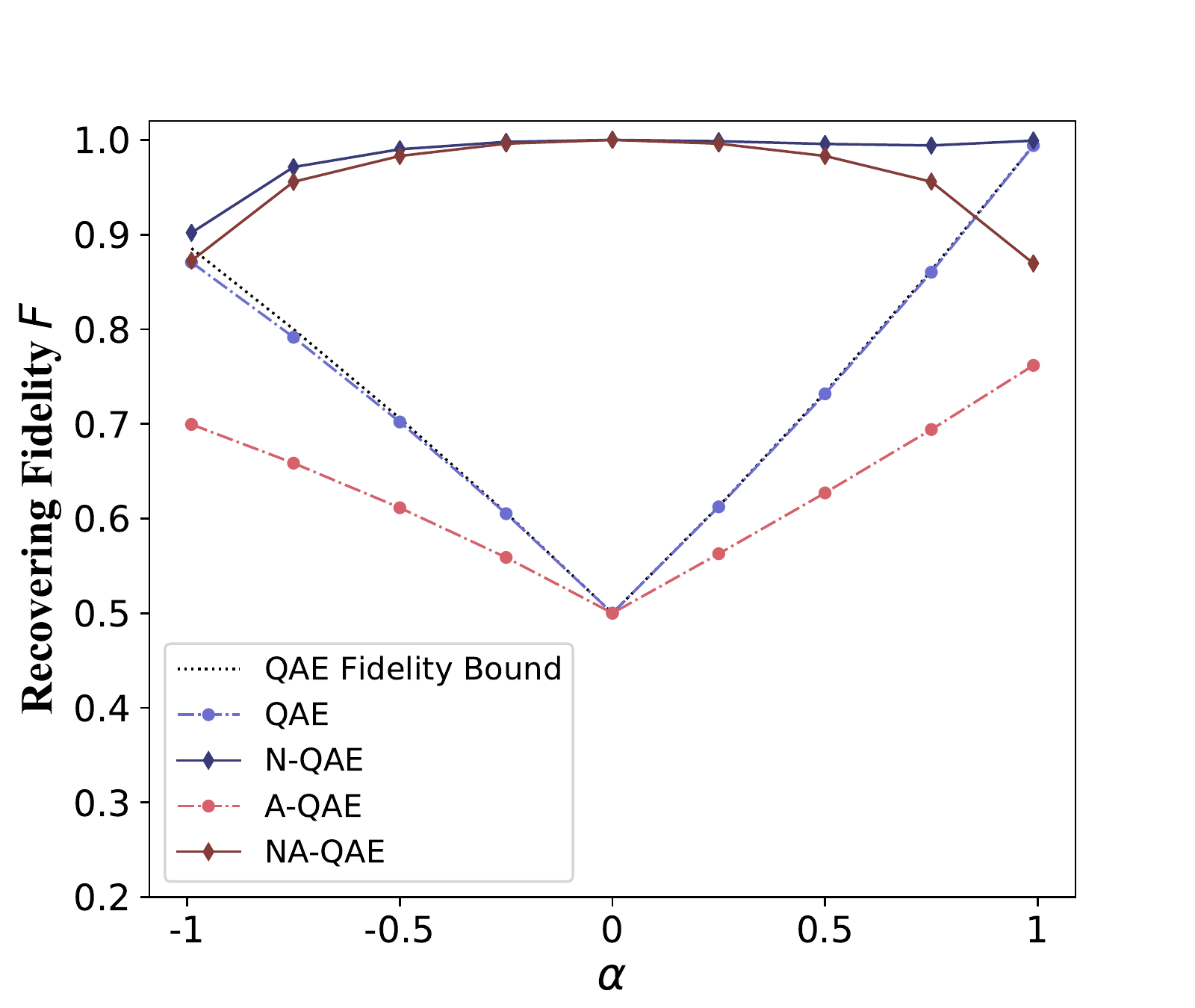}
	\caption{Recovering fidelity versus $\alpha$. The annealing time of A-QAE and NA-QAE is $t_{a} = 1 \times10^{3}$.}
	\label{fig:werner_f}
\end{figure}

\section{Conclusions and Outlook}\label{Conclusion}
We deliver an information-theoretic study on the limits of the QAE model and introduce noise-assisted QAE models that achieve a better reconstruction of quantum information. 
In standard QAE, an optimal encoder learns to decouple and compress eigenstates with the $d_A$ largest eigenvalues. We prove the fidelity bound of the standard QAE and its achievability. When the rank of input state $k$ is greater than $d_A$, additional eigenstate(s) are mapped to the space spanned by the aforementioned $d_A$ eigenstates. Therefore, one bottleneck of the standard QAE is the mixedness inconsistency between the input and the output, but our method could adjust it by adding suitable quantum operations, and we in particular utilize the information in the trash system. For mixed-state compression, we present a QAE variant assisted by noise, N-QAE, to achieve high-fidelity quantum data compression. For pure-state compression, we present P-QAE with a variational projection subroutine.

In addition to the optimization-based QAE, we introduce an adiabatic QAE model that can be run on quantum annealers. The noise-assisted version of A-QAE, NA-QAE, is also discussed. It will be interesting to study the optimal latent Hamiltonian for A-QAE and consider whether we can improve the recovering fidelity with a catalyst Hamiltonian~\cite{hormozi2017nonstoquastic, cao2021speedup}.

Numerical results of N-QAE and NA-QAE for compressing TFIM thermal states and Werner states demonstrate better performance. Our methods can be applied to general mixed-state compression, and the performance of N-QAE is no worse than the standard QAE in all cases.

Our models may also be applied to classical data compression. One possible way is to encode classical data to a specific Hamiltonian, prepare the Hamiltonian's thermal states and do compression, and then recover the initial information via quantum tomography. It is of interest to have further study on better schemes. Moreover, it will be interesting to further explore quantum information processing with the aid of machine learning, see Refs~\cite{Wallnofer2020,Zhao2021} for examples.

In addition to applying noise in a specific step, one may make use of gate noise in each operation as a resource directly. For example, a noisy quantum circuit without ancillary qubits may prepare the Gibbs state more efficiently than a noiseless quantum circuit. The positive aspects and applications of noise in the NISQ era are worth studying further. 

\acknowledgments
We thank Jiaqing Jiang, Runyao Duan, and Yinan Li for helpful discussions. This work was done when C. C. was a visiting student at Baidu Research.

\bibliography{references}

\appendix 
\section{Projected Quantum Autoencoder for An Ensemble of Pure States}\label{pure}
Sometimes, we may apply QAE to compress an ensemble of pure states instead of a single mixed state. When the dimension of the support of the ensemble is no greater than $d_A$, all states in the ensemble can be perfectly compressed and recovered since all states in that support Hilbert space can be perfectly decoupled. 

In this section, we modify the QAE model to improve its capability for quantum ensemble with high-dimensional support. The input state of the system $B'$ is determined by the measurement results of the "trash" system $B$ instead of $|0...0\rangle$. More specifically, we do projective measurements in the computational basis on system $B$, suppose $|m_{1}...m_{n_{B}} \rangle$ is the most probable string for input state $\ket\psi$, we input $|m_{1}...m_{n_{B}} \rangle$ as the initial state of $B'$.

Classical autoencoders can learn a representation of the set of data efficiently since they can implement nonlinear transformations and therefore extract complex features. Our P-QAE model is more capable than the model in Ref. \cite{Romero2017} due to the nonlinearity from measurements. 

The original QAE model can exactly recover states in one subspace $\mathcal{H}_{1}$ with $\operatorname{dim} \mathcal{H}_{1} = d_{A}$, $i.e.$, if each state in $\{\ket \psi\}$ satisfies $\ket \psi \in \mathcal{H}_{1}$, the ensemble can be exactly recovered by the standard QAE. Our modified QAE model can exactly recover states in $d_B$ such subspaces. Divide the input $d_{A}d_{B}-$dimensional Hilbert space $\mathcal{H}$ to $d_{B}$ orthogonal subspaces $\{\mathcal{H}_{j}\}_{j=1}^{d_B}$, with all $\operatorname{dim} \mathcal{H}_{j} = d_{A}$,
\begin{equation}
    \mathcal{H} = \bigoplus_{j=1}^{d_{B}}\mathcal{H}_{j}
\end{equation}
if each state $\ket \psi$ is in one of the subspaces $\mathcal{H}_j$, $\ket \psi \in \mathcal{H}_{j}$, the ensemble $\{\ket \psi\}$ can be exactly recovered by P-QAE.

We train the encoder by projective measurements on the system $B$. The loss function is defined by
\begin{equation}
L_{p} = 1 - \sum_{j = 1}^{n_{B}}(\tr(\rho_{B}^{out} Z_{j}))^{2},
\end{equation}
where $\rho_{B}$ is the state of system $B$. If we can optimize the loss function to 0, all input states can be perfectly recovered. 

After training, we variationally project $\rho_{A}^{out}$ to its eigenstate with the largest eigenvalue and input it to the decoder. The recovering fidelity can be further improved because the rank of the recovered state is consistent with the initial state, $i.e.$, $1$. 

For example, consider pure state $U\ket\psi = \sqrt{\frac{2}{3}}\ket{00} + \sqrt{\frac{1}{3}}\ket{11}$, then
\begin{align}
F(\widetilde\psi,\psi) & = \tr \proj\psi U^\dagger(\rho_{A}^{out}\otimes\proj0_B)U\\
& =  \tr U\proj\psi U^\dagger(\rho_{A}^{out}\otimes\proj0_B)\\
& = \frac{4}{9}. \label{eq:example original fidelity}
\end{align}
If we add a subroutine $\cP$ to project the compressed state $\psi_A^{out}$ to its eigenstate with largest eigenvalue, $i.e.$ $\proj 0_A$. The fidelity of reconstruction in this case is 
\begin{align}
F(\widetilde\psi,\psi) = \tr \proj\psi U^\dagger(\proj 0_A\otimes\proj0_B) U = \frac23,
\end{align}
which is strictly better than the fidelity in Eq.~\eqref{eq:example original fidelity} obtained via the previous approach. 

\begin{figure}[H]
	\centering
	\includegraphics[width=8.5cm]{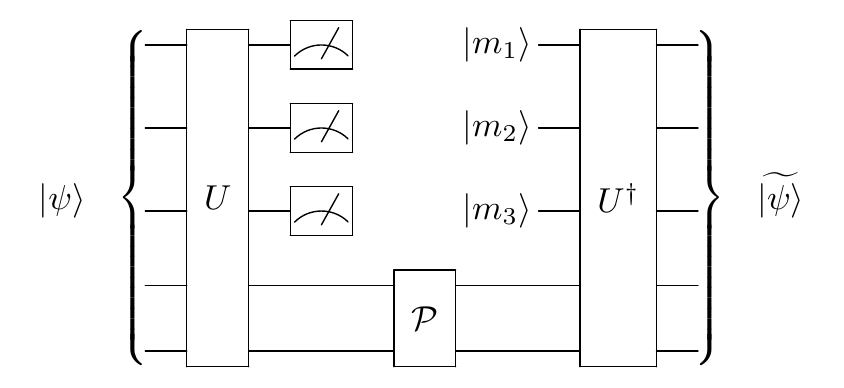}
	\caption{Structure of 5-2-5 P-QAE}
	\label{fig:pqae1}
\end{figure}

The hybrid algorithm of the projected quantum autoencoder is as follows (see Fig.~\ref{fig:pqae1} for the general scheme).
\begin{enumerate}
\item Training the encoder for the ensemble
\begin{enumerate}
	\item Choose the ansatz of encoder unitary $U(\bm\theta)$ and initial parameters of $\bm\theta$;
	\item Apply the encoder $U(\bm\theta)$ to the initial state $\psi_{AB}$;
	
	\item Measure the trash system $B$ and update cost function $L_{p} = 1 - \sum_{j = 1}^{n_{B}}(\tr(\rho_{B}^{out} Z_{j}))^{2}$;
	
	\item Perform optimization of $L_p(\bm\theta)$ and update parameter $\bm\theta$;
 
	\item Repeat steps b--d until convergence with certain tolerance;
	
	\item Report the classical information $\bm\theta$ and the most probable string of $\rho_{B}^{out}$, $\ket m$.
\end{enumerate}	
\item Training the decoder for recovering the state $\ket \psi$ in the ensemble
    \begin{enumerate}
	\item Apply $U(\bm\theta)$ to $\ket \psi$ and keep the compressed state $\rho_A^{out} = \tr_B U(\bm\theta)\proj \psi U(\bm\theta)^\dagger$;
   \item Apply the projection subroutine $\cP$ to $\rho_A^{out}$ and obtain a pure state $\ket v$;
	
	\item Apply $U(\bm\theta)^{\dagger}$ to $\ket v_A\otimes \ket{m}_{B}$ and obtain $\ket{\widetilde{\psi}} = U(\bm\theta)^{\dagger} \ket v_A \otimes \ket m_B$;
	
	\item Output the reconstructed state $\ket{\widetilde{\psi}}_{AB}$.
    \end{enumerate}
\end{enumerate}

The projection subroutine $\cP$ can be implemented as follows:
\begin{enumerate}
 \item Choose the ansatz of the unitary $W(\bm\beta)$ with initial $\bm\beta$;
	\item Apply the unitary $W(\bm\beta)$ to the compressed state $\psi_A^{out}$;	
	\item Measure the overlap $f_A=\bra 0_AW(\bm\beta)\rho_A^{out}W(\bm\beta)^\dagger \ket 0_A$  and update $L_A(\bm\beta)= 1- f_A$;
	
	\item Perform optimization of $L_A(\bm\beta)$ and update parameter $\bm\beta$;
	
	\item Repeat steps 2--4 until convergence with certain tolerance;
	\item Output the state $W(\bm\beta)^\dagger \ket 0_A$
\end{enumerate}

The fidelity of reconstruction of each state via QAE is fully characterized by 
\begin{align}\label{eq:fid output QAE}
F(\psi^{out},\rho_A^{out}\otimes\proj0),
\end{align}
since
$F(\widetilde\psi,\psi)  = \tr \proj\psi U^\dagger(\rho_{A}^{out}\otimes\proj0_B)U
 = \tr \proj{\psi^{out}}(\rho_{A}^{out}\otimes\proj0_B)$.

Similarly, the fidelity of reconstruction of each state via PQAE is given by 
\begin{align}\label{eq:fid output PQAE}
F(\psi^{out},\cP(\rho_A^{out})\otimes\proj m)
\end{align}
with $\cP(\rho_A^{out})$ gives the eigenstate of $\rho_A^{out}$ with maximum eigenvalue, $\ket m$ being the most probable string of $\rho_{B}^{out}$ in the computational basis.

For QAE and PQAE with $n_{A} = 2$, $n_{B}=3$, we randomly sample an ensemble of $N$ pure quantum states from the uniform Haar measure, $\{\ket \psi_{j}\}_{j=1,...,N}$, then train QAE and PQAE to compress and recover the quantum ensemble, the recovering fidelities are shown in Fig. \ref{fig:pqae2}.

\begin{figure}[H]
	\centering
	\includegraphics[width=7.5cm]{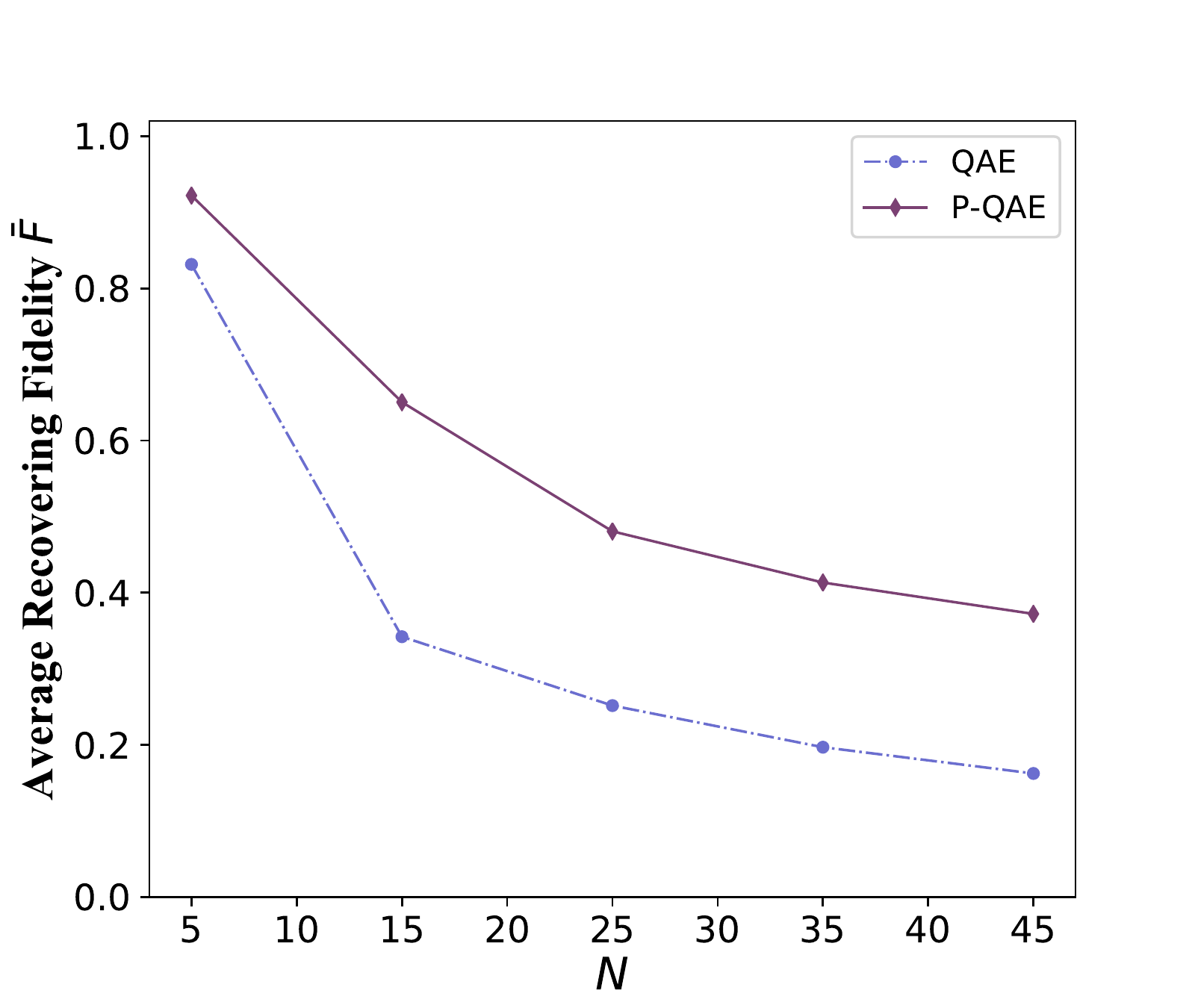}
	\caption{Recovering fidelity versus the number of random pure states in the ensemble. P-QAE can recover the ensemble with higher average fidelity than original QAE due to more recovering subspaces and rank-consistence. 
	}
	\label{fig:pqae2}
\end{figure}

Similar to N-QAE and NA-QAE, the target of P-QAE is to make input and output mixedness consistent, but the method is opposite since all inputs are pure states. On the one hand, we apply a measurement projection on the trash system $B$, compress each qubit information to a classical bit information, on the other hand, we use variational projection to extract the eigenstate of $\rho_{A}^{out}$ with the largest eigenvalue. The output of P-QAE is always a pure state due to the projections, the recovering fidelity can be improved for general pure inputs.

\end{document}